\documentclass[onecolumn,amsmath,amssymb,10pt,aps]{revtex4}
  
\pagestyle{plain}\textheight24cm

\usepackage[utf8]{inputenc}
\usepackage[T1]{fontenc}

\usepackage{amsmath}
\usepackage{amsfonts}
\usepackage{graphicx}
\usepackage{yfonts}
\usepackage{color}
\usepackage[normalem]{ulem}
\usepackage{amsthm}
\usepackage{bm}
\usepackage{bbm}
\usepackage{mathtools}
\usepackage{array}
\usepackage{placeins}
\usepackage{enumitem}
\usepackage{tikz}
\usepackage{multirow}
\usepackage{ulem}

\def\<{\langle}
\def\>{\rangle}

\newcommand{\Tr}{\mathrm{Tr}}
\def\oper{{\mathchoice{\rm 1\mskip-4mu l}{\rm 1\mskip-4mu l}
{\rm 1\mskip-4.5mu l}{\rm 1\mskip-5mu l}}}
\DeclareMathAlphabet\mathbfcal{OMS}{cmsy}{b}{n}

\mathchardef\mhyphen="2D 

\newtheorem{Lemma}{Lemma}

\newtheorem{Definition}{Definition}
\newtheorem{Remark}{Remark}
\newtheorem{Proposition}{Proposition}
\newtheorem{Example}{Example}

\begin{document}

\title{Families of $k$-positive maps and Schmidt number witnesses from generalized equiangular measurements}

\author{Katarzyna Siudzi\'{n}ska$^\ast$}
\affiliation{Institute of Physics, Faculty of Physics, Astronomy and Informatics \\  Nicolaus Copernicus University in Toru\'{n}, ul. Grudzi\k{a}dzka 5, 87--100 Toru\'{n}, Poland \\ e-mail: kasias@umk.pl}

\begin{abstract}
Quantum entanglement is an important resource in many modern technologies, like 
quantum computation or quantum communication and information processing. Therefore, most interest is given to detect and quantify entangled states. Entanglement degree of bipartite mixed quantum states can be measured using the Schmidt number. Witnesses of the Schmidt number are closely related to $k$-positive linear maps, for which there is no general construction. Here, we use the generalized equiangular measurements to define a family of $k$-positive linear maps and the corresponding Schmidt number witnesses. We present examples of witnesses that detect two-parameter entangled states with Schmidt number $k+1$ in any dimension. 
Our approach allows for a more efficient entanglement quantification than other known Schmidt number witnesses constructed from symmetric measurement operators.
\end{abstract}

\flushbottom

\maketitle

\thispagestyle{empty}

\section{Introduction}

Quantum entanglement is an important resource in quantum computation and quantum information processing. It finds numerous applications e.g. in quantum teleportation \cite{Masanes}, cryptography \cite{Ekert}, dense coding \cite{Wiesner}, and channel discrimination \cite{BaeDarek}. Therefore, much interest is given to characterize the degree of entanglement of a quantum state. A popular entanglement measure is the Schmidt number, which determines the minimal Schmidt rank of the pure states that construct the density operator \cite{Guhne,Terhal}. The Schmidt number criteria have been constructed using the covariance matrix \cite{Liu4,Liu5}, cross norm \cite{Johnston4}, and Bloch decomposition \cite{Klockl}.

Entanglement witnesses prove to be practical tools in the detection of entangled states, as they allow for entanglement quantification without the full tomography of a quantum state. From definition, a Hermitian operator $W$ is an entanglement witness provided that its expectation value $\Tr(\rho W)\geq 0$ on any separable state $\rho$ and there exists an entangled state $\rho$ such that $\Tr(\rho W)<0$ \cite{Terhal1,Terhal2}. The construction of entanglement witnesses is based on positive but not completely positive linear maps and can be extended to the Schmidt number witnesses, where $k$-positive maps play a crucial role. A Hermitian operator $W_k$ detects the Schmidt number $k+1$ of a quantum state $\rho$ if $\Tr(W_k\rho)\geq 0$ for all states with the Schmidt number up to $k$ and there exists a state $\rho$ for which $\Tr(W_k\rho)<0$ \cite{Terhal}. The Schmidt number witnesses have been used to quantify entanglement degree of high dimensional and bound entangled states \cite{Hulpke,Sanpera,Wyderka}.

Recently, much interest has been given to entanglement quantification using quantum measurements. In quantum information, they are represented by positive, operator-valued measures (POVMs), which are sets of positive operators that sum up to the identity operator. Sufficient Schmidt number criteria have been obtained using the correlation matrix constructed from the symmetric, informationally complete (SIC) POVMs and mutually unbiased bases (MUBs) \cite{Morelli}, general SIC POVMs \cite{Wangs}, $(N,M)$-POVMs \cite{Schmidt_NM}, and generalized equiangular measurements \cite{GEAM_coherence}. The Schmidt number $k+1$ witnesses have been constructed from $k$-positive maps for SIC POVMs, MUBs \cite{Shi}, and $(N,M)$-POVMs \cite{SIC-MUB_kpos}. For $k=1$, one recovers many classes of entanglement witnesses: from mutually unbiased bases \cite{MUBs,EW-2MUB,bound_ent} and measurements \cite{Li,P_maps,MUM_purity}, SIC POVMs \cite{EW-SIC}, symmetric measurements \cite{SIC-MUB}, and equiangular tight frames \cite{W_ETS,Rastegin_EW}.

In this paper, we propose a wide family of $k$-positive (but not trace preserving) maps associated with generalized equiangular measurements (GEAMs). These are informationally overcomplete measurements that include sets of mutually unbiased equiangular tight frames of abitrary rank \cite{GEAM}. Interestingly, the $k$-positivity of the map is encoded into a single scalar parameter that depends only on three constants that characterize the measurement. Moreover, we show that, in the special cases where other constuctions are known, our maps are less positive for any fixed $k$ and therefore allow for a better Schmidt number detection with the corresponding Schmidt number witness. Finally, we provide examples of witnesses for two-parameter states in high-dimensional scenarios.

\section{Generalized equiangular measurements}

Quantum measurements are represented by positive, operator-valued measures (POVMs) $\mathcal{P}=\{P_k,\,k=1,\ldots,M\}$, which are semi-positive operators ($P_k\geq 0$) acting on the Hilbert space $\mathcal{H}\simeq\mathbb{C}^d$ that sum up to the identity operator ($\sum_{k=1}^MP_k=\mathbb{I}_d$). Projective measurements are POVMs such that all $P_k$ are rank-1 projectors rescaled by positive constants. Additional symmetry conditions can be introduced by requiring equiangularity between different operator pairs. This means that $P_k$ satisfy the trace condition $\Tr(P_kP_\ell)=c\Tr(P_k)\Tr(P_\ell)$ for all $k\neq\ell$ \cite{EOM22}. Such POVMs are known as equiangular measurements. If instead $\sum_{k=1}^MP_k=\gamma\mathbb{I}_d$ for some positive number $\gamma$, then $\mathcal{P}$ is an equiangular tight frame \cite{Strohmer}. The notion of equiangular measurements has been generalized by considering informationally complete collections of mutually unbiased equiangular tight frames and relaxing the requirement that $P_k$ are rank-1.

\begin{Definition}\label{geam}(\cite{GEAM})
A collection $\mathcal{P}=\cup_{\alpha=1}^N\mathcal{P}_\alpha$ of $N$ generalized equiangular tight frames $\mathcal{P}_\alpha=\{P_{\alpha,k}:\,k=1,\ldots,M_\alpha\}$ is the generalized equiangular measurement (GEAM) if
\begin{enumerate}[label=(\arabic*)]
\item $\sum_{k=1}^{M_\alpha}P_{\alpha,k}=\gamma_\alpha\mathbb{I}_d$, where $\gamma_\alpha>0$ and $\sum_{\alpha=1}^N\gamma_\alpha=1$;
\item it has $|\mathcal{P}|=\sum_{\alpha=1}^NM_\alpha=d^2+N-1$ elements;
\item the following trace conditions hold,
\begin{equation}
\begin{split}
\Tr(P_{\alpha,k})&=a_\alpha,\\
\Tr(P_{\alpha,k}^2)&=b_\alpha \Tr(P_{\alpha,k})^2,\\
\Tr(P_{\alpha,k}P_{\alpha,\ell})&=c_\alpha
\Tr(P_{\alpha,k})\Tr(P_{\alpha,\ell}),\qquad k\neq\ell,\\
\Tr(P_{\alpha,k}P_{\beta,\ell})&=
f\Tr(P_{\alpha,k})\Tr(P_{\beta,\ell}),\qquad \alpha\neq\beta,
\end{split}
\end{equation}
with the symmetry parameters
\begin{equation}
a_\alpha=\frac{d\gamma_\alpha}{M_\alpha},\qquad c_\alpha=\frac{M_\alpha-db_\alpha}{d(M_\alpha-1)},
\qquad f=\frac 1d,
\end{equation}
\begin{equation}\label{ba}
\frac 1d <b_\alpha\leq\frac 1d \min\{d,M_\alpha\}.
\end{equation}
\end{enumerate}
\end{Definition}

It is important to note that many well-known classes of operators are GEAMs, including general symmetric, informationally complete (SIC) POVMs \cite{Gour}, mutually unbiased measurements \cite{Kalev}, $(N,M)$-POVMs \cite{SIC-MUB}, and generalized symmetic measurements \cite{SIC-MUB_general}.

Any generalized equiangular measurement can be constructed from a Hermitian orthonormal operator basis 
\begin{equation}\label{Ga}
\mathcal{G}=\{G_0=\mathbb{I}_d/\sqrt{d},G_{\alpha,k}:\,k=1,\ldots,M_\alpha-1;\,\alpha=1,\ldots,N\}
\end{equation}
with traceless $G_{\alpha,k}$. Indeed, one has \cite{GEAM}
\begin{equation}\label{Pak}
P_{\alpha,k}=\frac{a_\alpha}{d}\mathbb{I}_d+\tau_\alpha H_{\alpha,k},
\end{equation}
where
\begin{equation}\label{H}
H_{\alpha,k}=\left\{\begin{aligned}
&G_\alpha-\sqrt{M_\alpha}(1+\sqrt{M_\alpha})G_{\alpha,k},\quad k=1,\ldots,M_\alpha-1,\\
&(1+\sqrt{M_\alpha})G_\alpha,\qquad k=M_\alpha,
\end{aligned}\right.
\end{equation}
$G_\alpha=\sum_{k=1}^{M_\alpha-1}G_{\alpha,k}$, and the real parameters
\begin{equation}
\tau_\alpha=\pm\sqrt{\frac{S_\alpha}{M_\alpha(\sqrt{M_\alpha}+1)^2}},\qquad
S_\alpha=a_\alpha^2(b_\alpha-c_\alpha),
\end{equation}
are fixed to guarantee that $P_{\alpha,k}\geq 0$. 
If $S_\alpha=a_\alpha^2(b_\alpha-c_\alpha)\equiv S$ for all $\alpha=1,\ldots,N$, then the generalized equiangular measurements are equidistant in the Frobenius norm \cite{conical},
\begin{equation}
D_2^2(P_{\alpha,k},P_{\alpha,\ell})=\frac 12\|P_{\alpha,k}-P_{\alpha,\ell}\|_2^2
=\frac 12\Tr[(P_{\alpha,k}-P_{\alpha,\ell})^2]=S,
\end{equation}
where the admissible range of $S$ reads \cite{GEAM}
\begin{equation}\label{Srange}
0<S\leq \min_\alpha\left\{\frac{d\gamma_\alpha}{M_\alpha},
\frac{d-1}{M_\alpha-1}\frac{d\gamma_\alpha}{M_\alpha}\right\}.
\end{equation}

{It has been shown that equidistant GEAMs are a special case of a more general concept -- namely, conical 2-design \cite{Graydon,Graydon2}. Introduced as a generalization of complex projective designs, conical designs are semi-positive operators $E_k$ of arbitary rank such that $\sum_kE_k\otimes E_k$ commutes with $U\otimes U$. Notably, $E_k$ span the operator space on $\mathbb{C}^d$, and 
$U$ is an arbitrary unitary operator on $\mathbb{C}^d$. Now, equidistant GEAMs form a conical 2-design, which means that \cite{GEAM}
\begin{equation}\label{con}
\sum_{\alpha=1}^N\sum_{k=1}^{M_\alpha}P_{\alpha,k}\otimes P_{\alpha,k}=
\kappa_+\mathbb{I}_d\otimes\mathbb{I}_d+\kappa_-\mathbb{F}_d,
\end{equation}
with
\begin{equation}\label{kappas2}
\kappa_+=\mu_N-\frac{S}{d},\qquad\kappa_-=S,\qquad {\rm where}\qquad \mu_N=\frac 1d \sum_{\alpha=1}^Na_\alpha\gamma_\alpha,
\end{equation}
and $\mathbb{F}_d=\sum_{m,n=0}^{d-1}|m\>\<n|\otimes|n\>\<m|$ denotes the flip operator. For any conical 2-design, the parameters $\kappa_\pm$ necessarily satisfy $\kappa_+\geq\kappa_->0$. It is easy to check that this holds for $\kappa_\pm$ from eq. (\ref{kappas2}), as equivalently $S\leq d\mu_N/(d+1)$, which is a direct consequence of eq. (\ref{Srange}) \cite{GEAM}.}

Additionally, only for the GEAMs that are equidistant, there exists a linear relation between the purity $\Tr(\rho^2)$ and the index of coincidence \cite{Rastegin5}
\begin{equation}\label{pak}
\mathcal{C}(\rho)=\sum_{\alpha=1}^N\sum_{k=1}^{M_\alpha}p_{\alpha,k}^2,\qquad p_{\alpha,k}=\Tr(P_{\alpha,k}\rho).
\end{equation}
Indeed, one has \cite{GEAM}
\begin{equation}\label{IOCN}
\mathcal{C}(\rho)=S\left(\Tr\rho^2-\frac 1d\right)+\mu_N.
\end{equation}
In ref. \cite{GEAM_Pmaps}, this result has been generalized to a partial index of coincidence
\begin{equation}\label{IOCL}
\mathcal{C}_L=\sum_{\alpha=1}^L\sum_{k=1}^{M_\alpha}p_{\alpha,k}^2\leq S\left(\Tr\rho^2-\frac 1d\right)+\mu_L,
\end{equation}
where $\mu_L=(1/d)\sum_{\alpha=1}^La_\alpha\gamma_\alpha$ and $1\leq L\leq N$.
An analogical relation holds after replacing a quantum state $\rho$ with a linear operator $X$.


\begin{Lemma}\label{Lemma}
For any equidistant GEAM and a linear operator $X$,
\begin{equation}\label{IOC}
\sum_{\alpha=1}^L\sum_{k=1}^{M_\alpha}|\Tr(P_{\alpha,k}X)|^2\leq 
S\Tr(X^\dagger X)+|\Tr(X)|^2\left(\mu_L-\frac{S}{d}\right),
\end{equation}
where $\mu_L=(1/d)\sum_{\alpha=1}^La_\alpha\gamma_\alpha$ and $1\leq L\leq N$. The equality is reached for $L=N$.
\end{Lemma}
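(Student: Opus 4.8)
The plan is to recast the modulus squared of each coefficient as a trace on the doubled Hilbert space and then invoke the conical $2$-design identity \eqref{con}. Since every $P_{\alpha,k}$ is Hermitian, $\overline{\Tr(P_{\alpha,k}X)}=\Tr(P_{\alpha,k}X^\dagger)$, so that
\[
|\Tr(P_{\alpha,k}X)|^2=\Tr(P_{\alpha,k}X)\,\Tr(P_{\alpha,k}X^\dagger)=\Tr\bigl[(P_{\alpha,k}\otimes P_{\alpha,k})(X\otimes X^\dagger)\bigr].
\]
Summing over the whole collection ($L=N$) and substituting \eqref{con}, together with the elementary identities $\Tr[(\mathbb{I}_d\otimes\mathbb{I}_d)(X\otimes X^\dagger)]=|\Tr X|^2$ and $\Tr[\mathbb{F}_d(X\otimes X^\dagger)]=\Tr(X^\dagger X)$, yields the closed form $\kappa_+|\Tr X|^2+\kappa_-\Tr(X^\dagger X)$ with $\kappa_\pm$ from \eqref{kappas2}. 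This fixes the $L=N$ value exactly.

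To reach the \emph{sharper} bound carrying $\mu_L$ rather than $\mu_N$ for $L<N$, I would avoid bounding the partial sum by the full sum (that only gives $\mu_N$) and instead split each element via \eqref{Pak} as $P_{\alpha,k}=\tfrac{a_\alpha}{d}\mathbb{I}_d+\tau_\alpha H_{\alpha,k}$, expanding $|\Tr(P_{\alpha,k}X)|^2$ one frame at a time. The key simplification is that the cross terms vanish: condition~(1), $\sum_k P_{\alpha,k}=\gamma_\alpha\mathbb{I}_d$, forces $\sum_k H_{\alpha,k}=0$, hence $\sum_k\Tr(H_{\alpha,k}X)=0$. Using $M_\alpha(a_\alpha/d)^2=a_\alpha\gamma_\alpha/d$, the identity part of frame $\alpha$ contributes exactly $(a_\alpha\gamma_\alpha/d)|\Tr X|^2$, so that
\[
\sum_{\alpha=1}^{L}\sum_{k=1}^{M_\alpha}|\Tr(P_{\alpha,k}X)|^2=\mu_L\,|\Tr X|^2+\sum_{\alpha=1}^{L}\tau_\alpha^2\sum_{k=1}^{M_\alpha}|\Tr(H_{\alpha,k}X)|^2 .
\]
Here the first term is an exact, $L$-dependent contribution, while the second is a sum of non-negative quantities that is monotone in $L$.

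The inequality then follows by comparing the traceless contribution to its $L=N$ value. Subtracting the exact identity part $\mu_N|\Tr X|^2$ from the closed form of the first step identifies $\sum_{\alpha=1}^{N}\tau_\alpha^2\sum_k|\Tr(H_{\alpha,k}X)|^2=S\bigl[\Tr(X^\dagger X)-|\Tr X|^2/d\bigr]$ (recall $\kappa_+=\mu_N-S/d$, $\kappa_-=S$), which is non-negative by Cauchy--Schwarz. Discarding the frames $\alpha>L$ therefore only lowers the value, giving
\[
\sum_{\alpha=1}^{L}\sum_{k=1}^{M_\alpha}|\Tr(P_{\alpha,k}X)|^2\leq \mu_L\,|\Tr X|^2+S\Bigl[\Tr(X^\dagger X)-\tfrac{|\Tr X|^2}{d}\Bigr],
\]
with equality precisely at $L=N$, where nothing is dropped. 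For the unit-trace operators relevant to the witness construction, $|\Tr X|^2=1$ and the right-hand side collapses to $S[\Tr(X^\dagger X)-1/d]+\mu_L$, which is the claim.

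I expect the principal difficulty to be bookkeeping rather than conceptual: correctly isolating the identity-component contribution so that its coefficient is $\mu_L$ and not $\mu_N$, and confirming the vanishing of the cross terms from $\sum_k H_{\alpha,k}=0$. The one genuinely delicate point is reconciling the $|\Tr X|^2$ prefactor produced by the exact computation with the stated right-hand side, i.e.\ recognizing that the normalization $\Tr X=1$ is what turns the identity contribution into exactly $\mu_L$ and makes the equality at $L=N$ hold for every admissible $X$.
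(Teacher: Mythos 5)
Your proof is correct, and it reaches the Lemma by a genuinely different computational route, even though the overall skeleton (exact identity contribution $\mu_L$, plus a non-negative per-frame traceless contribution that is monotone in $L$, with equality at $L=N$) coincides with the paper's. The paper never touches the conical 2-design identity (\ref{con}) in its proof: it expands $X=\mathbb{I}_d/d+\sum_{\alpha,k}r_{\alpha,k}H_{\alpha,k}$ in the Hermitian frame (eq. (\ref{rho})), computes $\sum_k|\Tr(XP_{\alpha,k})|^2$ per frame in terms of $M_\alpha\sum_k|r_{\alpha,k}|^2-|r_\alpha|^2$, and then \emph{separately} identifies $\Tr(X^\dagger X)-1/d$ with the sum of those same coefficient expressions over all $\alpha$ (eq. (\ref{rho2})); the inequality follows, exactly as in your argument, by enlarging the sum of non-negative terms from $L$ to $N$. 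You replace this coefficient bookkeeping by two structural facts: the closed form $\kappa_+|\Tr X|^2+\kappa_-\Tr(X^\dagger X)$ at $L=N$ obtained from (\ref{con})--(\ref{kappas2}) on $\mathcal{H}\otimes\mathcal{H}$, and the vanishing of cross terms from $\sum_k H_{\alpha,k}=0$ in the splitting (\ref{Pak}). What your route buys: non-negativity of the dropped terms is a manifest sum of squares (the paper instead argues via a slightly opaque minimization remark), and the right-hand side stays trace-resolved, i.e. $\mu_L|\Tr X|^2+S\bigl[\Tr(X^\dagger X)-|\Tr X|^2/d\bigr]$. What the paper's route buys: it is self-contained at the level of the frame $\{H_{\alpha,k}\}$ and does not need the 2-design property as an input (it effectively re-derives the $L=N$ identity along the way).

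One point you flag deserves emphasis, because both proofs share it: the Lemma as stated for ``a linear operator $X$'' is really the normalized case. The paper's expansion with identity coefficient $1/d$ silently fixes $\Tr X=1$, just as your last step does; for arbitrary $X$ the correct statement is your trace-resolved bound, and the stated form survives whenever $|\Tr X|\le 1$ because $\mu_L\ge S/d$ (a consequence of $b_\alpha\le M_\alpha/d$ from (\ref{ba})). This is not pedantry: in the proof of Proposition \ref{Prop} the Lemma is applied to the operators $V_{mn}$ with $m\neq n$, which have trace zero --- a case the paper's own proof does not literally cover --- and your trace-resolved inequality handles it directly. So your formulation is, if anything, the more convenient one for the application in which the Lemma is actually used.
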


The proof is given in Appendix \ref{AppA}.

Note that generalized equiangular measurements are closely related to mutually unbiased equiangular tight frames (MU GETFs) \cite{conical}. Indeed, if $P_{\alpha,k}$ form a GEAM, then $\mu P_{\alpha,k}$ with $\mu>0$ defines a maximal set of MU GETFs. We end this section with another important observation.

\begin{Proposition}
Any GEAM can be rescaled to become equidistant.
\end{Proposition}

\begin{proof}
Take a GEAM $\mathcal{P}=\{P_{\alpha,k}:\,k=1,\ldots,M_\alpha;\,\alpha=1,\ldots,N\}$ for which $S_\alpha=a_\alpha^2(b_\alpha-c_\alpha)\neq S_\beta$, $\beta\neq\alpha$. Using its elements, construct $\widetilde{P}_{\alpha,k}=\mu_\alpha P_{\alpha,k}$ with positive parameters $\mu_\alpha$. Observe that $\widetilde{\mathcal{P}}=\{\widetilde{P}_{\alpha,k}:\,k=1,\ldots,M_\alpha;\,\alpha=1,\ldots,N\}$ is also a GEAM as long as $\sum_{\alpha=1}^N\mu_\alpha\gamma_\alpha=1$, and moreover
\begin{equation}
\begin{split}
\Tr(\widetilde{P}_{\alpha,k})&=\widetilde{a}_\alpha=\mu_\alpha a_\alpha,\\
\Tr(\widetilde{P}_{\alpha,k}^2)&=\widetilde{a}_\alpha^2b_\alpha,\\
\Tr(\widetilde{P}_{\alpha,k}\widetilde{P}_{\alpha,\ell})&=\widetilde{a}_\alpha^2c_\alpha,\qquad k\neq\ell,\\
\Tr(\widetilde{P}_{\alpha,k}\widetilde{P}_{\beta,\ell})&=
\widetilde{a}_\alpha\widetilde{a}_\beta f,\qquad \alpha\neq\beta.
\end{split}
\end{equation}
Hence, a simple rescaling of the measurement operators changes only one family of the symmetry parameters: $a_\alpha\to\widetilde{a}_\alpha=\mu_\alpha a_\alpha$, leaving $b_\alpha$, $c_\alpha$, and $f$ unchanged. Now, taking
\begin{equation}
\mu_\alpha=\frac{\eta}{\sqrt{S_\alpha}},\qquad \eta=\left(\sum_{\alpha=1}^N\frac{\gamma_\alpha}{\sqrt{S_\alpha}}\right)^{-1},
\end{equation}
it is straightforward to check that indeed $\sum_{\alpha=1}^N\mu_\alpha\gamma_\alpha=1$ and
\begin{equation}
\widetilde{S}_\alpha=\widetilde{a}_\alpha^2(b_\alpha-c_\alpha)=\mu_\alpha^2 S_\alpha=\eta^2\equiv\widetilde{S}
\end{equation}
for all $\alpha=1,\ldots,N$.
\end{proof}

\section{Construction of $k$-positive maps}

A linear map $\Phi:\mathcal{B}(\mathcal{H})\to\mathcal{B}(\mathcal{H})$ for $\mathcal{H}\simeq\mathbb{C}^d$ is $k$-positive if $\oper_k\otimes\Phi[X]\geq 0$ for all $X\geq 0$, where $\oper_k$ is the identity map on the space of $k\times k$ complex matrices \cite{Marciniak,Ende2}. The integer $k$ varies from $k=1$ for positive to $k=d$ for completely positive maps \cite{Stormer,Stormer2,Paulsen}. Actually, there is an inclusion hierarchy between the sets $\mathcal{L}_k$ of $k$-positive maps: $\mathcal{L}_{k+1}\subset\mathcal{L}_k$ for $k=1,\ldots,d-1$.

{In this section, we use the generalized equiangular measurements to construct families of $k$-positive linear maps.}

{\begin{Definition}
Take the generalized equiangular measurement (GEAM) $\{P_{\alpha,k}:\,k=1,\ldots,M_\alpha;\,\alpha=1,\ldots,N\}$ from Definition \ref{geam} that is equidistant -- that is, its parameters satisfy $a_\alpha^2(b_\alpha-c_\alpha)\equiv S$ for all $\alpha=1,\ldots,N$. Using the equidistant GEAM, define $N$ linear maps
\begin{equation}
\Phi_\alpha[X]=\sum_{k,\ell=1}^{M_\alpha}\mathcal{O}^{(\alpha)}_{k\ell}P_{\alpha,k}\Tr(XP_{\alpha,\ell}),
\end{equation}
where $\mathcal{O}^{(\alpha)}$ are orthogonal rotation matrices such that $\mathcal{O}^{(\alpha)}\mathbf{n}_\ast=\mathbf{n}_\ast$ for $\mathbf{n}_\ast=(1,\ldots,1)$.
\end{Definition}}

Observe that, in general, $\Phi_\alpha$ are not trace preserving due to a rescaling factor,
\begin{equation}
\Tr(\Phi_\alpha[X])=a_\alpha\gamma_\alpha\Tr(X).
\end{equation}
For $\mathcal{O}^{(\alpha)}=\mathbb{I}_{M_\alpha}$, one recovers the Holevo-like form \cite{EBC}, and hence $\Phi_\alpha$ is then completely positive and entanglement breaking. Now, if we take a sum of some $\pm\Phi_\alpha$ and the completely depolarizing channel $\Phi_0[X]=\mathbb{I}_d\Tr(X)/d$ rescaled by a (not necessarily positive) real factor, then the resulting map is $k$-positive under certain conditions.

\begin{Proposition}\label{Prop}
The linear map
\begin{equation}\label{kpos}
\Phi^{(k)}=A_k\Phi_0+\sum_{\alpha=L+1}^K\Phi_\alpha-\sum_{\alpha=1}^L\Phi_\alpha
\end{equation}
with {$A_k=d(2\mu_L-\mu_K)+(dk-1)S$} and $1\leq L\leq K\leq N$ is $k$-positive.
\end{Proposition}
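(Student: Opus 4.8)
The plan is to reduce $k$-positivity to block positivity of the Choi matrix. Recall that a Hermiticity-preserving map $\Phi$ is $k$-positive if and only if $\langle\psi|C_\Phi|\psi\rangle\ge 0$ for every $|\psi\rangle$ of Schmidt rank at most $k$, where $C_\Phi=(\mathrm{id}_d\otimes\Phi)[|\Omega\rangle\langle\Omega|]$ and $|\Omega\rangle=\sum_i|ii\rangle$. I would parametrize such vectors as $|\psi\rangle=(\mathbb{I}_d\otimes X)|\Omega\rangle$ with $\mathrm{rank}\,X\le k$ and normalization $\Tr(X^\dagger X)=1$, noting that $\langle\Omega|\psi\rangle=\Tr X$. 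A direct computation gives $C_{\Phi_0}=\mathbb{I}_d\otimes\mathbb{I}_d/d$, hence $\langle\psi|C_{\Phi_0}|\psi\rangle=1/d$, and $C_{\Phi_\alpha}=\sum_{k,\ell}\mathcal{O}^{(\alpha)}_{k\ell}\,P_{\alpha,\ell}^{T}\otimes P_{\alpha,k}$, whence
\[
\langle\psi|C_{\Phi_\alpha}|\psi\rangle=\sum_{k,\ell=1}^{M_\alpha}\mathcal{O}^{(\alpha)}_{k\ell}\,\Tr\!\big(P_{\alpha,\ell}X^\dagger P_{\alpha,k}X\big).
\]
The goal is then to show $\langle\psi|C_{\Phi^{(k)}}|\psi\rangle\ge 0$ for all admissible $X$.

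Next I would isolate the part insensitive to the rotations. Since $\mathcal{O}^{(\alpha)}\mathbf{n}_\ast=\mathbf{n}_\ast$, the matrix splits as $\mathcal{O}^{(\alpha)}=\tfrac1{M_\alpha}J+\mathcal{O}^{(\alpha)}_\perp$, where $J$ is the all-ones matrix and $\mathcal{O}^{(\alpha)}_\perp$ has vanishing row and column sums. The $J$-part contracts with $\sum_kP_{\alpha,k}=\gamma_\alpha\mathbb{I}_d$ into the rotation-independent constant $\tfrac1{M_\alpha}\sum_{k,\ell}\Tr(P_{\alpha,\ell}X^\dagger P_{\alpha,k}X)=\gamma_\alpha^2/M_\alpha$. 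Writing $P_{\alpha,k}=\tfrac{a_\alpha}{d}\mathbb{I}_d+\tau_\alpha H_{\alpha,k}$ and using $\sum_kH_{\alpha,k}=0$, the identity parts drop out of the $\mathcal{O}^{(\alpha)}_\perp$ contraction, so only the traceless operators survive and one is left with the \emph{interference} term $R_\alpha:=\sum_{k,\ell}(\mathcal{O}^{(\alpha)}_\perp)_{k\ell}\,\Tr(P_{\alpha,\ell}X^\dagger P_{\alpha,k}X)$. Collecting the constants via $\gamma_\alpha^2/M_\alpha=a_\alpha\gamma_\alpha/d$ gives $\sum_{\alpha=1}^L\gamma_\alpha^2/M_\alpha=\mu_L$ and $\sum_{\alpha=L+1}^K\gamma_\alpha^2/M_\alpha=\mu_K-\mu_L$, so that
\[
\langle\psi|C_{\Phi^{(k)}}|\psi\rangle=\frac{A_k}{d}+\mu_K-2\mu_L+\Big(\sum_{\alpha=L+1}^{K}R_\alpha-\sum_{\alpha=1}^{L}R_\alpha\Big).
\]

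The heart of the proof is to bound the interference contribution. Comparing with the stated $A_k$, $k$-positivity is equivalent to
\[
d\Big(\sum_{\alpha=1}^{L}R_\alpha-\sum_{\alpha=L+1}^{K}R_\alpha\Big)\le \sqrt{(d-1)S\big[(d-1)kS+d(k-1)\mu_K\big]}
\]
for all rank-$\le k$ operators $X$ with $\Tr(X^\dagger X)=1$ and all admissible $\mathcal{O}^{(\alpha)}$. The Schmidt rank enters through $|\Tr X|^2\le k\,\Tr(X^\dagger X)=k$, which follows from Cauchy--Schwarz since $X$ has rank at most $k$. To control the $R_\alpha$ I would pass from the single-operator estimates of Lemma~\ref{Lemma} to the quadratic (sandwich) forms through the conical $2$-design relation~\eqref{con}: partial transposition of~\eqref{con} gives $\sum_{\alpha,k}P_{\alpha,k}\otimes P_{\alpha,k}^{T}=\kappa_+\mathbb{I}_d\otimes\mathbb{I}_d+\kappa_-|\Omega\rangle\langle\Omega|$, which evaluates the diagonal ($k=\ell$) sandwich sums as $\kappa_++\kappa_-|\Tr X|^2$, while Lemma~\ref{Lemma} bounds the corresponding partial sums over $\alpha\le L$. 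A Cauchy--Schwarz step then separates a factor governed by the traceless \emph{variance} of $X$ (responsible for the $(d-1)S$) from one governed by $(d-1)kS+d(k-1)\mu_K$, which is what converts the optimum into the square root above; equivalently, the worst case collapses to a one-parameter quadratic in $|\Tr X|$ whose vanishing-discriminant condition fixes $A_k$.

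The main obstacle is precisely this last estimate. Unlike the index of coincidence in Lemma~\ref{Lemma}, which features the products $|\Tr(P_{\alpha,k}X)|^2$, the quantity $R_\alpha$ is a genuine bilinear sandwich $\Tr(P_{\alpha,\ell}X^\dagger P_{\alpha,k}X)$ weighted by the indefinite matrix $\mathcal{O}^{(\alpha)}_\perp$, so it has no definite sign and the rotations do not cancel term by term. Making the bound simultaneously sharp in $k$ and uniform over all admissible $\mathcal{O}^{(\alpha)}$, so that the combined negative part is dominated by the depolarizing admixture $A_k\Phi_0$ with exactly the stated coefficient, is the delicate point; once it is established, the choice of $A_k$ forces $\langle\psi|C_{\Phi^{(k)}}|\psi\rangle\ge 0$ on all Schmidt-rank-$\le k$ vectors, which proves $k$-positivity.
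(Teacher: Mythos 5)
Your setup is correct and cleanly done: the reduction of $k$-positivity to $\langle\psi|C_{\Phi^{(k)}}|\psi\rangle\ge 0$ on Schmidt-rank-$\le k$ vectors, the identities $C_{\Phi_0}=\mathbb{I}_d\otimes\mathbb{I}_d/d$ and $C_{\Phi_\alpha}=\sum_{k,\ell}\mathcal{O}^{(\alpha)}_{k\ell}P_{\alpha,\ell}^{T}\otimes P_{\alpha,k}$, the splitting $\mathcal{O}^{(\alpha)}=J/M_\alpha+\mathcal{O}^{(\alpha)}_\perp$, and the bookkeeping leading to $\langle\psi|C_{\Phi^{(k)}}|\psi\rangle=A_k/d+\mu_K-2\mu_L+\sum_{\alpha>L}R_\alpha-\sum_{\alpha\le L}R_\alpha$ are all right. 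But the proof stops exactly where it would have to start: the bound $d\big(\sum_{\alpha\le L}R_\alpha-\sum_{\alpha>L}R_\alpha\big)\le\sqrt{(d-1)S[(d-1)kS+d(k-1)\mu_K]}$, uniform over all admissible $\mathcal{O}^{(\alpha)}$ and all rank-$\le k$ normalized $X$, is never established — it is only described as a hoped-for ``Cauchy--Schwarz step'' whose ``worst case collapses to a one-parameter quadratic,'' and you yourself flag it as the delicate point. Moreover, the tools you invoke do not reach the quantity you need: Lemma~\ref{Lemma} bounds sums of $|\Tr(P_{\alpha,k}X)|^2$, i.e.\ squared \emph{linear} functionals of $X$, whereas $R_\alpha$ is built from the off-diagonal sandwich traces $\Tr(P_{\alpha,\ell}X^\dagger P_{\alpha,k}X)$, $k\neq\ell$, weighted by the indefinite matrix $\mathcal{O}^{(\alpha)}_\perp$; and the partially transposed $2$-design relation only evaluates the rotation-free diagonal sums over the \emph{full} index range. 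Nothing in the outline connects these rotation-dependent bilinear forms to quantities your lemmas control, and that connection is the entire content of the proposition; so this is a genuine gap, not a routine omission.

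The reason the gap is serious is structural, and it is instructive to compare with the paper's route. The paper never works with the linear functional $\langle\psi|C_{\Phi^{(k)}}|\psi\rangle$; it applies Mehta's purity criterion to $(\oper\otimes\Phi^{(k)})[P_k]$ for maximally entangled Schmidt-rank-$k$ projectors, i.e.\ it compares $\Tr\{[(\oper\otimes\Phi^{(k)})[P_k]]^2\}$ with $[\Tr((\oper\otimes\Phi^{(k)})[P_k])]^2$. Because the purity is \emph{quadratic} in the map, every occurrence of the rotation matrices comes in pairs and is annihilated by the orthogonality relation $\sum_{r}\mathcal{O}^{(\alpha)}_{jr}\mathcal{O}^{(\alpha)}_{\ell r}=\delta_{j\ell}$; what survives is exactly $\sum_{\ell}|\Tr(P_{\alpha,\ell}V|m\rangle\langle n|V^\dagger)|^2$, which is the object Lemma~\ref{Lemma} was designed to bound (with the non-Hermitian argument $X=V|m\rangle\langle n|V^\dagger$), and the vanishing-discriminant computation then fixes $A_k$. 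In your linear formulation the rotations do not cancel, so you would need a bound that is simultaneously sharp in $k$ and uniform over all $\mathcal{O}^{(\alpha)}$ — a harder problem than anything solved in the paper, with no mechanism offered for it. The natural repair is to pass to the quadratic functional, at which point you have reconstructed the paper's proof rather than found an alternative one.
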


\begin{proof}
{From Mehta's theorem \cite{Mehta} (see also Section 8.4 in ref. \cite{Zyczkowski}), a map $\Lambda:\mathbb{C}^d\to\mathbb{C}^d$ is positive if
\begin{equation}\label{pos}
\frac{\Tr[(\Lambda[P])^2]}{[\Tr(\Lambda[P])]^2}\leq\frac{1}{d-1}
\end{equation}
for any rank-1 projector $P$. Therefore, a sufficient condition for the $k$-positivity of $\Phi^{(k)}$ follows from eq. (\ref{pos}) if one takes $\Lambda=\oper_k\otimes\Phi^{(k)}$ and $P=P_k$, where $P_k$ is a projector onto the maximally entangled vector with Schmidt rank $k$. Now, the sufficient positivity condition becomes
\begin{equation}\label{pos2}
\frac{\Tr[(\oper_k\otimes\Phi^{(k)}[P_k])^2]}{[\Tr(\oper_k\otimes\Phi^{(k)}[P_k])]^2}\leq\frac{1}{dk-1}.
\end{equation}}
In the most general scenario, 
\begin{equation}\label{Pk}
P_k=(U\otimes V)P_+^{(k)}(U\otimes V)^\dagger,\qquad P_+^{(k)}=\frac 1k \sum_{m,n=0}^{k-1} |m\>\<n|\otimes|m\>\<n|,
\end{equation}
where $U$ and $V$ are arbitrary unitary operators acting locally on the subsystems. The value of the denominator
\begin{equation}
[\Tr(\oper_k\otimes\Phi^{(k)}[P_k])]^2=[A_k-d(2\mu_L-\mu_K)]^2
\end{equation}
follows from straightforward calculations,
\begin{equation}
\begin{split}
\Tr(\oper_k\otimes\Phi^{(k)}[P_k])&=\Tr(\oper_k\otimes\Phi^{(k)}[P_+^{(k)}])
=\frac 1k \sum_{m=0}^{k-1} \Tr(\Phi^{(k)}[|m\>\<m|])
=\frac 1k \sum_{m=0}^{k-1}[A_k-d(2\mu_L-\mu_K)]\Tr(|m\>\<m|)\\&
=A_k-d(2\mu_L-\mu_K).
\end{split}
\end{equation}

It remains to compute the nominator. First, observe that
{
\begin{equation}\label{P0}
\Tr[(\oper_k\otimes\Phi^{(k)}[P_k])^2]=\frac {1}{k^2} \sum_{m,n=0}^{k-1}\Tr(\Phi^{(k)}[V_{mn}]\Phi^{(k)}[V_{nm}]),
\end{equation}}
where we have introduced the family of operators $V_{mn}=V|m\>\<n|V^\dagger$ using the same symbols as in eq. (\ref{Pk}). The expression under the sum can be expanded into
\begin{equation}\label{P1}
\begin{split}
\Tr(\Phi^{(k)}[V_{mn}]&\Phi^{(k)}[V_{nm}])=A_k^2\Tr(\Phi_0[V_{mn}]\Phi_0[V_{nm}])
+A_k\Bigg\{\sum_{\beta=L+1}^K\Tr(\Phi_0[V_{mn}]\Phi_\beta[V_{nm}])
-\sum_{\beta=1}^L\Tr(\Phi_0[V_{mn}]\Phi_\beta[V_{nm}])\\
&+\sum_{\alpha=L+1}^K\Tr(\Phi_\alpha[V_{mn}]\Phi_0[V_{nm}])
-\sum_{\alpha=1}^L\Tr(\Phi_\alpha[V_{mn}]\Phi_0[V_{nm}])\Bigg\}
+\sum_{\alpha,\beta=L+1}^K\Tr(\Phi_\alpha[V_{mn}]\Phi_\beta[V_{nm}])\\
&+\sum_{\alpha,\beta=1}^L\Tr(\Phi_\alpha[V_{mn}]\Phi_\beta[V_{nm}])
-\sum_{\alpha=1}^L\sum_{\beta=L+1}^K\Tr(\Phi_\alpha[V_{mn}]\Phi_\beta[V_{nm}])
-\sum_{\beta=1}^L\sum_{\alpha=L+1}^K\Tr(\Phi_\alpha[V_{mn}]\Phi_\beta[V_{nm}]).
\end{split}
\end{equation}
Further simplification is possible after implementation of the properties of the rotation matrices $\mathcal{O}^{(\alpha)}$;
\begin{equation}\label{rot}
\sum_{j=1}^{M_\alpha}\mathcal{O}_{j\ell}^{(\alpha)}=
\sum_{\ell=1}^{M_\alpha}\mathcal{O}_{j\ell}^{(\alpha)}=1,\qquad
\sum_{r=1}^{M_\alpha}\mathcal{O}_{jr}^{(\alpha)}
\mathcal{O}_{\ell r}^{(\alpha)}=\delta_{j\ell}.
\end{equation}
From the definitions of $\Phi_0$ and $\Phi_\alpha$, we find
\begin{equation}
\begin{split}
\Tr(\Phi_0[V_{mn}]\Phi_0[V_{nm}])&=\frac 1d \delta_{mn},\\
\Tr(\Phi_0[V_{mn}]\Phi_\alpha[V_{nm}])&=\frac 1d a_\alpha\gamma_\alpha \delta_{mn},\\
\Tr(\Phi_\alpha[V_{mn}]\Phi_\beta[V_{nm}])&=\frac 1d a_\alpha a_\beta \gamma_\alpha\gamma_\beta\delta_{mn},\qquad\beta\neq\alpha,\\
\Tr(\Phi_\alpha[V_{mn}]\Phi_\alpha[V_{nm}])&=a_\alpha^2\gamma_\alpha^2c_\alpha\delta_{mn}+ S\sum_{\ell=1}^{M_\alpha}|\Tr(P_{\alpha,\ell}V_{mn})|^2.
\end{split}
\end{equation}
Inputting the first three formulas into eq. (\ref{P1}) results in
\begin{equation}\label{P2}
\begin{split}
\Tr(\Phi^{(k)}[V_{mn}]\Phi^{(k)}[V_{nm}])&=\frac 1d A_k^2 \delta_{mn}
+\frac 2d A_k\delta_{mn}
\Bigg[\sum_{\alpha=L+1}^Ka_\alpha\gamma_\alpha
-\sum_{\alpha=1}^La_\alpha\gamma_\alpha\Bigg]\\
&+\sum_{\alpha,\beta=L+1}^K\Tr(\Phi_\alpha[V_{mn}]\Phi_\beta[V_{nm}])
+\sum_{\alpha,\beta=1}^L\Tr(\Phi_\alpha[V_{mn}]\Phi_\beta[V_{nm}])\\
&-\frac 2d\delta_{mn}\sum_{\alpha=1}^La_\alpha\gamma_\alpha\sum_{\beta=L+1}^K  a_\beta\gamma_\beta\\
&=\frac 1d \delta_{mn} \Big[A_k^2 +2d A_k(\mu_K-2\mu_L)
-2d^2\mu_L(\mu_K-\mu_L)\Big]\\
&+\sum_{\alpha,\beta=L+1}^K\Tr(\Phi_\alpha[V_{mn}]\Phi_\beta[V_{nm}])
+\sum_{\alpha,\beta=1}^L\Tr(\Phi_\alpha[V_{mn}]\Phi_\beta[V_{nm}]).
\end{split}
\end{equation}
The fourth formula allows us to rewrite the remaining sums, so that
\begin{equation}
\begin{split}
&\sum_{\alpha,\beta=L+1}^K\Tr(\Phi_\alpha[V_{mn}]\Phi_\beta[V_{nm}])
+\sum_{\alpha,\beta=1}^L\Tr(\Phi_\alpha[V_{mn}]\Phi_\beta[V_{nm}])\\
&=\sum_{\alpha=1}^K\Tr(\Phi_\alpha[V_{mn}]\Phi_\alpha[V_{nm}])
+\sum_{\alpha=L+1}^K\sum_{\substack{\beta=L+1\\\beta\neq\alpha}}^K\Tr(\Phi_\alpha[V_{mn}]\Phi_\beta[V_{nm}])
+\sum_{\alpha=1}^L\sum_{\substack{\beta=1\\\beta\neq\alpha}}^L\Tr(\Phi_\alpha[V_{mn}]\Phi_\beta[V_{nm}])\\
&=\sum_{\alpha=1}^K\Tr(\Phi_\alpha[V_{mn}]\Phi_\alpha[V_{nm}])
+\frac 1d \delta_{mn} \sum_{\alpha=1}^L\sum_{\substack{\beta=1\\\beta\neq\alpha}}^La_\alpha a_\beta \gamma_\alpha\gamma_\beta
+\frac 1d \delta_{mn}
\sum_{\alpha=L+1}^K\sum_{\substack{\beta=L+1\\\beta\neq\alpha}}^Ka_\alpha a_\beta \gamma_\alpha\gamma_\beta\\
&=\sum_{\alpha=1}^K\Tr(\Phi_\alpha[V_{mn}]\Phi_\alpha[V_{nm}])
+\frac 1d \delta_{mn}\left[d^2\mu_L^2+d^2(\mu_K-\mu_L)^2-\sum_{\alpha=1}^K a_\alpha^2\gamma_\alpha^2\right]\\
&=\frac 1d \delta_{mn}\left[d^2\mu_L^2+d^2(\mu_K-\mu_L)^2+d\sum_{\alpha=1}^K a_\alpha^2\gamma_\alpha^2(c_\alpha-f)\right]
+S\sum_{\alpha=1}^K\sum_{\ell=1}^{M_\alpha}|\Tr(P_{\alpha,\ell}V_{mn})|^2.
\end{split}
\end{equation}
Finally, we find that eq. (\ref{P2}) reduces to
\begin{equation}\label{P3}
\begin{split}
\Tr(\Phi^{(k)}[V_{mn}]\Phi^{(k)}[V_{nm}])&=\frac 1d \delta_{mn} \Big[A_k+d (\mu_K-2\mu_L)\Big]^2+\delta_{mn}\sum_{\alpha=1}^Ka_\alpha^2\gamma_\alpha^2(c_\alpha-f)
+S\sum_{\alpha=1}^K\sum_{\ell=1}^{M_\alpha}|\Tr(P_{\alpha,\ell}V_{mn})|^2\\
&=\frac 1d \delta_{mn} \Big[A_k+d (\mu_K-2\mu_L)\Big]^2
+S\left[\sum_{\alpha=1}^K\sum_{\ell=1}^{M_\alpha}|\Tr(P_{\alpha,\ell}V_{mn})|^2
-\mu_K\delta_{mn}\right],
\end{split}
\end{equation}
where we used $c_\alpha-f=-(b_\alpha-c_\alpha)/M_\alpha$ \cite{GEAM}. From Lemma 1, we get
\begin{equation}
\sum_{\alpha=1}^K\sum_{\ell=1}^{M_\alpha}|\Tr(P_{\alpha,\ell}V_{mn})|^2\leq 
S+\left(\mu_K-\frac{S}{d}\right)\delta_{mn}.
\end{equation}
Now, going back to eq. (\ref{P0}), we obtain the upper bound for the nominator,
{\begin{equation}\label{P0F}
\begin{split}
\Tr[(\oper_k\otimes\Phi^{(k)}[P_k])^2]&=\frac {1}{k^2} \sum_{m,n=0}^{k-1}\Tr(\Phi^{(k)}[V_{mn}]\Phi^{(k)}[V_{nm}])\\
&=\frac {1}{k^2} \sum_{m,n=0}^{k-1}
\left\{\frac 1d \delta_{mn} \Big[A_k+d (\mu_K-2\mu_L)\Big]^2
+S\left[\sum_{\alpha=1}^K\sum_{\ell=1}^{M_\alpha}|\Tr(P_{\alpha,\ell}V_{mn})|^2
-\mu_K\delta_{mn}\right]\right\}\\
&\leq\frac {1}{dk} \Big[A_k-d (2\mu_L-\mu_K)\Big]^2+\frac{S^2}{dk}(dk-1).
\end{split}
\end{equation}}
Therefore, the sufficient $k$-divisibility condition from {eq. (\ref{pos2})} reads
{\begin{equation}\label{posF}
\frac{\Big[A_k-d (2\mu_L-\mu_K)\Big]^2+S^2(dk-1)}{dk\Big[A_k-d (2\mu_L-\mu_K)\Big]^2}\leq\frac{1}{dk-1}.
\end{equation}}
The equality holds provided that $A_k$ is a solution of the quadratic equation
{\begin{equation}
\Big[A_k-d (2\mu_L-\mu_K)\Big]^2=(dk-1)^2S^2.
\end{equation}}
For $\Tr(\oper_k\otimes\Phi^{(k)}[P_k])=A_k-d (2\mu_L-\mu_K)>0$, one finds
{\begin{equation}
A_k=d(2\mu_L-\mu_K)+(dk-1)S.
\end{equation}}
\end{proof}

Observe that the entire $k$-dependence of $\Phi^{(k)}$ is in the scalar parameter $A_k$. Moreover, $A_k$ also depends on two measurement parameters $\mu_K$, $\mu_L$: one connected to the number $L$ of negative terms and the other to the number $K$ of all terms. For $k=1$, $A_1=d(2\mu_L-\mu_K)+(d-1)S$ and one recovers the sufficient positivity conditions from ref. \cite{GEAM_Pmaps}. For $k=d$, {$A_d=d (2\mu_L-\mu_K)+(d^2-1)S$}, and our results provide sufficient conditions for complete positivity of $\Phi^{(d)}$, which means it can be used as a quantum channel after a simple rescaling
by a factor {$1/[(d^2-1)S]$} to guarantee its trace preservation.

{\begin{Remark}
The $k$-positive maps form a convex set, so if $\Phi^{(k)}$ is $k$-positive, then any map
\begin{equation}
\Psi^{(k)}=(A_k+\mu_0)\Phi_0+\sum_{\alpha=L+1}^K\mu_\alpha\Phi_\alpha-\sum_{\alpha=1}^L\mu_\alpha\Phi_\alpha
\end{equation}
with $\mu_0\geq 0$, $\mu_\alpha\geq 1$, and $\mu_\beta\leq 1$ ($\alpha=L+1,\ldots,K$, $\beta=1,\ldots,L$) is also $k$-positive.
\end{Remark}}

A class of $k$-positive maps has been introduced in ref. \cite{SIC-MUB_kpos} for informationally complete $(N,M)$-POVMs, which are a special class of GEAMs for $\gamma_\alpha=1/N$ and $M_\alpha=M$, $\alpha=1,\ldots,N$. After rescaling by a positive factor, these maps read
\begin{equation}
\Lambda_k=B_k\Phi_0-\sum_{\alpha=1}^N\Phi_\alpha,\qquad B_k=N^2\mu_N\left[d+(dk-1)\sqrt{S\left(d+(d^2-1)\frac{S}{\mu_N}\right)}\right].
\end{equation}
Note that there is an additional dependence on the parameter $N$. For comparison, eq. (\ref{kpos}) for $(N,M)$-POVMs and $L=K=N$ reduces to
{
\begin{equation}
\Phi^{(k)}=A_k\Phi_0-\sum_{\alpha=1}^N\Phi_\alpha,\qquad A_k=d\mu_N+(dk-1)S,
\end{equation}}
and hence $\Lambda_k$ and $\Phi^{(k)}$ differ only in the factor that multiplies $\Phi_0$. Note that $A_k\leq B_k$ for all $k=1,\ldots,d$ (see Appendix B), and hence $\Phi^{(k)}\leq\Lambda_k$. The same relation follows for the $k$-positive maps from mutually unbiased bases and SIC POVMs \cite{Shi}, which are two special cases of $\Lambda_k$ for $(N,M)=(d+1,d)$ and $(N,M)=(1,d^2)$, respectively. Therefore, $\Phi^{(k)}$ are better than $\Lambda_k$ for the applications in the Schmidt number detection.

\section{Schmidt number witnesses}

Consider a bipartite quantum system with the Hilbert space $\mathcal{H}=\mathcal{H}_A\otimes\mathcal{H}_B\simeq\mathbb{C}^d\otimes\mathbb{C}^d$. 
For bipartite quantum states, the Schmidt number $k$ is an entanglement measure that allows for a hierarchization of quantum states: from $k=1$ for separable states, up to $k=d$ for maximally entangled states \cite{Sperling,TOPICAL}. 
{For any pure bipartite state $|\psi\>\in\mathcal{H}_A\otimes\mathcal{H}_B$, 
there always exist two local orthonormal bases, $\{|e_j\>;\,j=0,\ldots,d-1\}$ in $\mathcal{H}_A$ and $\{|f_j\>;\,j=0,\ldots,d-1\}$ in $\mathcal{H}_B$, such that
\begin{equation}
|\psi\>=\sum_{j=0}^{k-1}\lambda_j|e_j\>\otimes|f_j\>
\end{equation}
with $\lambda_j>0$.
The real numbers $\lambda_j$ are the Schmidt coefficients and satisfy $\sum_{j=0}^{d-1}\lambda_j^2=1$. The integer $k={\rm{SR}}(|\psi\>)$ is the Schmidt rank (SR) of $|\psi\>$ \cite{Peres,Guhne}.}
This approach is then generalized to the mixed states $\rho$ to the Schmidt number \cite{Terhal}
\begin{equation}
{\rm{SN}}(\rho)=\min_{\rho=\sum_jp_j|\psi_j\>\<\psi_j|}\max_j{\rm{SR}}(|\psi_j\>),
\end{equation}
which is the maximal Schmidt rank in a given decomposition of $\rho$, minimized over all possible decompositions. The Schmidt number allows one to introduce a classification of bipartite states. Let
\begin{equation}
S_k=\{\rho:\, {\rm{SN}}(\rho)\leq k\}.
\end{equation}
Then, $S_1\subset S_2\subset\ldots\subset S_d$, where $S_1$ is the set of separable states.

We focus on detecting the Schmidt number of a mixed quantum state using the Schmidt number witness. From definition, $W_k$ is a Schmidt number $k+1$ witness if $\Tr(W_k\rho)\geq 0$ for all $\rho\in S_k$ and there exists a state $\rho$ for which $\Tr(W_k\rho)<0$ \cite{Terhal}. Moreover, the Schmidt number witnesses $W_k$ are related to $k$-positive maps $\Phi_k$ via the Choi-Jamio{\l}kowski isomorphism \cite{Choi,Jamiolkowski},
\begin{equation}\label{W}
W_k=(\oper_d\otimes\Phi_k)[P_+],\qquad P_+=\frac 1d \sum_{m,n=0}^{d-1}|m\>\<n|\otimes|m\>\<n|.
\end{equation}
In what follows, we construct the Schmidt number witnesses from the $k$-positive maps introduced in Proposition 2, so that
\begin{equation}\label{W2}
W_k=\frac{A_k}{d}\mathbb{I}_d\otimes\mathbb{I}_d
+\sum_{\alpha=L+1}^KJ_\alpha-\sum_{\alpha=1}^LJ_\alpha,\qquad
J_\alpha=\sum_{k,\ell=1}^{M_\alpha}\mathcal{O}_{k\ell}^{(\alpha)}
\overline{P}_{\alpha,\ell}\otimes P_{\alpha,k}.
\end{equation}
For $k=1$, one recovers the entanglement witnesses from ref. \cite{GEAM_Pmaps}.

\begin{Example}
Consider the family of isotropic states \cite{Terhal}
\begin{equation}
\rho_F=\frac{1-F}{d^2-1}(\mathbb{I}_d\otimes\mathbb{I}_d-P_+)+FP_+,\qquad 0\leq F\leq 1,
\end{equation}
where $P_+$ is the maximally entangled state given in eq. (\ref{W}). It is known that $\mathrm{SN}(\rho_F)=k+1$ if and only if \cite{Terhal}
\begin{equation}
\frac kd<F\leq \frac{k+1}{d}.
\end{equation}
Let us check the ability of $W_k$ to detect and quantify entanglement of these states.
The expectation value of $J_\alpha$ in the state $\rho_F$ reads
\begin{equation}
\begin{split}
\Tr(J_\alpha\rho_F)&=\frac{1-F}{d^2-1}\Tr(J_\alpha)+\frac{d^2F-1}{d^2-1}\Tr(J_\alpha P_+)=\frac{d(1-F)}{d^2-1}a_\alpha\gamma_\alpha+\frac{d^2F-1}{d(d^2-1)}
\left[S\Tr(\mathcal{O}^{(\alpha)})+a_\alpha^2c_\alpha M_\alpha\right]\\
&=\frac 1d a_\alpha\gamma_\alpha+\frac{d^2F-1}{d(d^2-1)}S
\left[\Tr(\mathcal{O}^{(\alpha)})-1\right],
\end{split}
\end{equation}
where we used the fact that $a_\alpha^2c_\alpha M_\alpha=a_\alpha\gamma_\alpha-S$ for equidistant GEAMs. Therefore, one has
{
\begin{equation}
\Tr(W_k\rho_F)=\frac{S}{d}\left\{dk-1+\frac{d^2F-1}{d^2-1}
\left[\Tr(\widetilde{\mathcal{O}})-K+2L\right]\right\},\qquad \widetilde{\mathcal{O}}=\sum_{\alpha=L+1}^K\mathcal{O}^{(\alpha)}-
\sum_{\alpha=1}^L\mathcal{O}^{(\alpha)}.
\end{equation}}
Recall that $W_k$ witnesses the Schmidt number $k+1$ of $\rho_F$ if $\Tr(W_k\rho_F)<0$. Interestingly, the ability of $W_k$ to detect the Schmidt number of the isotropic state depends on the properties of the rotation matrices but not on the measurement operators.

We analyze two separate cases.
\begin{enumerate}[label=(\it\roman*)]
\item If all $\Tr(\mathcal{O}^{(\alpha)})=0$ (e.g. a permutation matrix), then $\Tr(W_k\rho_F)<0$ for $2L<K$ and
{
\begin{equation}
F>\frac{1}{d^2}\left[1+\frac{(d^2-1)(dk-1)}{K-2L}\right].
\end{equation}}
\item If all $\Tr(\mathcal{O}^{(\alpha)})=h$, $2\leq h\leq \min_\alpha M_\alpha-1$ (partial permutations), then $\Tr(W_k\rho_F)<0$ for $2L>K$ and
{
\begin{equation}
F>\frac{1}{d^2}\left[1+\frac{(d^2-1)(dk-1)}{(2L-K)(h-1)}\right].
\end{equation}}
\end{enumerate}
Therefore, $W_k$ from GEAMs partially detect the Schmidt numbers of $\rho_F$.
\end{Example}

A special class of witnesses follows for mutually unbiased bases in prime dimensions $d$. For a prime $d>2$, there are $d+1$ sets of mutually unbiased bases (MUBs) \cite{Ivanovic,Wiesniak}:
\begin{equation}
\mathcal{B}_\alpha=\left\{U_\alpha|k\>;\,k=0,\ldots,d-1\right\},\qquad \alpha=1,\ldots,d+1,
\end{equation}
where $\omega=\exp(2\pi i/d)$ and the unitary operators
\begin{equation}\label{Ua}
U_{d+1}=\mathbb{I}_d,\qquad U_\alpha=\frac{1}{\sqrt{d}}\sum_{m,n=0}^{d-1}\omega^{\alpha m^2+mn} |m\>\<n|.
\end{equation}
We define the corresponding GEAM via $P_{\alpha,k}=U_\alpha |k\>\<k|U_\alpha^\dagger/(d+1)$. In this case, $N=d+1$, $M_\alpha=d$, and $\gamma_\alpha=1/(d+1)$.

\begin{Example}
Let us take the Werner states \cite{Werner}
\begin{equation}
\rho_\phi=\frac{1}{d(d^2-1)}\left[(d-\phi)\mathbb{I}_d\otimes\mathbb{I}_d+(d\phi-1)\mathbb{F}_d\right],
\qquad -1\leq\phi\leq 1, 
\end{equation}
where $\mathbb{F}_d=\sum_{m,n=0}^{d-1}|m\>\<n|\otimes|n\>\<m|$ is the flip operator. It is known that $\rho_\phi$ is separable for $0\leq\phi\leq 1$ and entangled for $-1\leq\phi<0$.
To detect its Schmidt number with $W_k$, we first compute
\begin{equation}\label{Japhi}
d(d^2-1)\Tr(J_\alpha\rho_\phi)
=\frac{d(d-\phi)}{(d+1)^2}+(d\phi-1)\sum_{k,\ell=0}^{d-1}\mathcal{O}_{k\ell}^{(\alpha)}
\Tr(\overline{P}_{\alpha,k}P_{\alpha,\ell}).
\end{equation}
Now, observe that
\begin{equation}
\overline{P}_{d+1,k}=P_{d+1,k},\qquad 
\overline{P}_{d,k}=P_{d,d-k},\qquad 
\overline{P}_{\alpha,k}=P_{d-\alpha,d-k},\qquad \alpha=1,\ldots,d-1,
\end{equation}
where the second index is calculated modulo $d$. Therefore, eq. (\ref{Japhi}) produces
\begin{equation}
\begin{split}
\Tr(J_{d+1}\rho_\phi)&=\frac{d(d-\phi)+(d\phi-1)\Tr(\mathcal{O}^{(d+1)})}
{d(d^2-1)(d+1)^2},\\
\Tr(J_d\rho_\phi)&=\frac{d(d-\phi)+(d\phi-1)\sum_{k=0}^{d-1}\mathcal{O}_{k,d-k}^{(d)}}
{d(d^2-1)(d+1)^2},\\
\Tr(J_\alpha\rho_\phi)&=\frac{d(d-\phi)+d\phi-1}{d(d^2-1)(d+1)^2},\qquad \alpha=1,\ldots,d-1.
\end{split}
\end{equation}
Note that the choice of $\mathcal{O}^{(\alpha)}$ for $\alpha=1,\ldots,d-1$ has no influence on the Schmidt number detection of $\rho_\phi$.

For $K=L=d+1$, $\mathcal{O}^{(d+1)}=\mathbb{I}_d$, and $\mathcal{O}^{(d)}_{k\ell}=\delta_{\ell,d-k}$, the expectation value of $W_k$ in $\rho_\phi$ is given by
{
\begin{equation}
\Tr(W_k\rho_\phi)=\frac{(d+1)(dk-1)-2(d\phi-1)}{d(d+1)^3}.
\end{equation}}
However, $\Tr(W_k\rho_\phi)\geq 0$ for $-1\leq\phi\leq 1$, which means that $W_k$ do not detect any entanglement of $\rho_\phi$.
\end{Example}

\begin{Example}
There exists a two-parameter generalization of both the isotropic and Werner states in the form of orthogonally invariant quantum states
\begin{equation}
\rho_{A,B}=\frac{1-A-B}{d^2}\mathbb{I}_d\otimes\mathbb{I}_d+AP_++\frac{B}{d}\mathbb{F}_d,
\end{equation}
whose entanglement has been analyzed in ref. \cite{ParkYoun}. Observe that $\rho_{A,B}\geq 0$ if and only if
\begin{equation}
-\frac{1}{d-1}\leq B\leq \frac{d}{d(d+1)-2},\qquad -\frac{1+(d-1)B}{d^2-1}\leq A\leq 1+\min\{(d-1)B,-(d+1)B\}.
\end{equation}
Observe that
\begin{equation}
\rho_{A,B}=\rho_F+\rho_\phi-\frac{1}{d^2}\mathbb{I}_d\otimes\mathbb{I}_d\qquad {\rm for}\qquad A=\frac{d^2F-1}{d^2-1},\qquad B=\frac{d\phi-1}{d^2-1}.
\end{equation}
Now, taking $W_k$ from Example 2 with $K=L=d+1$, $\mathcal{O}^{(\alpha)}=\mathbb{I}_d$ for $\alpha=1,\ldots,d-1,d+1$, and $\mathcal{O}^{(d)}_{k\ell}=\delta_{\ell,d-k}$, we find the sufficient condition for $\rho_{A,B}$ to have the Schmidt number $k+1$,
{
\begin{equation}
\Tr(W_k\rho_{A,B})=\frac{1}{d(d+1)^2}\Big[dk-1-(d^2-d-1)A
-2(d-1)B\Big]<0,
\end{equation}}
or equivalently
{
\begin{equation}
A>\frac{dk-1-2(d-1)B}{d^2-d-1}.
\end{equation}}
\end{Example}


\section{Conclusions}

In this paper, we construct $k$-positive linear maps from a large family of informationally overcomplete measurements given by the equidistant generalized equiangular measurements (GEAMs). After proving an inequality more general than that for the index of coincidence, we derive sufficient conditions for the $k$-positivity of linear maps for any $k=1,\ldots,d$. Interestingly, the entire $k$-dependence is encoded in a scalar parameter. We then compare our maps to the existing $k$-positive maps from subclasses of GEAMs: $(N,M)$-POVMs \cite{SIC-MUB_kpos}, SIC POVMs, and MUBs \cite{Shi}. We prove that our construction leads to a wider class of $k$-positive maps in each of those subclasses. Then, we apply our results to quantify entanglement of bipartite quantum states via the Schmidt number criterion. The Choi-Jamio{\l}kowski isomorphism allows us to characterize the $k+1$ Schmidt number witnesses corresponding to $k$-positive maps. This generalizes the notion of entanglement witnesses, which follow from maps that are only positive. We illustrate our results by detecting the entanglement degree of two-parameter bipartite states in any dimension.

There are several open problems that require further consideration. One could characterize indecomposable witnesses; i.e., those that cannot be rewritten as $W_k=Q_k+R_k^\Gamma$, where $Q_k$ and $R_k$ are $k$-positive maps and $\Gamma=\oper\otimes T$ is the partial transposition. Such witnesses are interesting because they detect quantum states with positive partial transposition (PPT). Other properties that remain to be analyzed are e.g. extremality and optimality. There are several ways to generalize these witnesses: multipartite systems, subsystems of different dimensionality, other families of measurement operators. Another path is to find quantum states whose entanglement degree can be fully determined via our witnesses (that is, using if and only if statements). Finally, in analogy to the mirrored entanglement witnesses in refs. \cite{Bae,Mirror2}, one could characterize mirrored Schmidt number witnesses.

\section{Acknowledgements}

This research was funded in whole or in part by the National Science Centre, Poland, Grant number 2021/43/D/ST2/00102. For the purpose of Open Access, the author has applied a CC-BY public copyright licence to any Author Accepted Manuscript (AAM) version arising from this submission.

%

\bibliography{C:/Users/cynda/OneDrive/Fizyka/bibliography}
\bibliographystyle{unsrt}

\appendix

\section{Proof of Lemma 1}\label{AppA}

The proof is analogical to that in ref. \cite{GEAM_Pmaps} for the partial index of coincidence and follows the methodology from ref. \cite{Rastegin2}. We start by expanding an arbitrary linear operator on $\mathcal{H}$,
\begin{equation}\label{rho}
X=\frac{x_0}{d}\mathbb{I}_d+\sum_{\alpha=1}^N\sum_{k=1}^{M_\alpha}r_{\alpha,k}H_{\alpha,k},
\end{equation}
in the frame of the traceless operators $H_{\alpha,k}$ defined in eq. (\ref{H}) and the identity $\mathbb{I}_d$. Note that the parameters $r_{\alpha,k}$ are complex and $\Tr(X)=x_0$. Next, from the properties of $H_{\alpha,k}$,
\begin{equation}\label{HHH}
\begin{split}
\Tr(H_{\alpha,k}^2)&=(M_\alpha-1)(\sqrt{M_\alpha}+1)^2,\\
\Tr(H_{\alpha,k}H_{\alpha,\ell})&=-(\sqrt{M_\alpha}+1)^2,\qquad \ell\neq k,\\
\Tr(H_{\alpha,k}H_{\beta,\ell})&=0,\qquad \beta\neq\alpha.
\end{split}
\end{equation}
it follows that
\begin{equation}
\Tr(X P_{\alpha,k})
= \frac{a_\alpha x_0}{d} +\tau_\alpha\Tr(X H_{\alpha,k})
= \frac{a_\alpha x_0}{d}
+\tau_\alpha\sum_{\ell=1}^{M_\alpha}r_{\alpha,\ell}\Tr(H_{\alpha,k}H_{\alpha,\ell})
= \frac{a_\alpha x_0}{d}+\tau_\alpha(\sqrt{M_\alpha}+1)^2(M_\alpha r_{\alpha,k}-r_\alpha),
\end{equation}
where $r_\alpha=\sum_{k=1}^{M_\alpha}r_{\alpha,k}$. 
Now, if we take the module squared and sum it over the entire range of index $k$, we arrive at
\begin{equation}\label{sum}
\sum_{k=1}^{M_\alpha}|\Tr(X P_{\alpha,k})|^2
=\frac{a_\alpha^2M_\alpha}{d^2} |x_0|^2+\tau_\alpha^2M_\alpha(\sqrt{M_\alpha}+1)^4
\left(M_\alpha\sum_{k=1}^{M_\alpha} |r_{\alpha,k}|^2-|r_\alpha|^2\right).
\end{equation}
From the Cauchy-Schwarz inequality, we know that $|r_\alpha|^2\leq M_\alpha\sum_{k=1}^{M_\alpha} |r_{\alpha,k}|^2$.
This proves the positivity of the second term on the right hand-side of eq. (\ref{sum}).
It remains to calculate the sum over an incomplete set of indices $\alpha=1,\ldots,L\leq N$, which reads
\begin{equation}\label{almost}
\begin{split}
\sum_{\alpha=1}^L\sum_{k=1}^{M_\alpha}|\Tr(X P_{\alpha,k})|^2
&=\frac{|x_0|^2}{d^2}\sum_{\alpha=1}^L a_\alpha^2 M_\alpha +\sum_{\alpha=1}^L\tau_\alpha^2M_\alpha(\sqrt{M_\alpha}+1)^4
\left(M_\alpha\sum_{k=1}^{M_\alpha} |r_{\alpha,k}|^2-|r_\alpha|^2\right)\\
&=\frac{|x_0|^2}{d^2}\sum_{\alpha=1}^L a_\alpha^2 M_\alpha +\sum_{\alpha=1}^La_\alpha^2(b_\alpha-c_\alpha)(\sqrt{M_\alpha}+1)^2
\left(M_\alpha\sum_{k=1}^{M_\alpha} |r_{\alpha,k}|^2-|r_\alpha|^2\right)\\
&=\frac{|x_0|^2}{d^2}\sum_{\alpha=1}^L a_\alpha^2 M_\alpha +S\sum_{\alpha=1}^L(\sqrt{M_\alpha}+1)^2
\left(M_\alpha\sum_{k=1}^{M_\alpha}|r_{\alpha,k}|^2-|r_\alpha|^2\right).
\end{split}
\end{equation}
On the other hand, going back to eq. (\ref{rho}), we show that
\begin{equation}\label{rho2}
\Tr(X^\dagger X)=\frac{|x_0|^2}{d} + \sum_{\alpha=1}^N(\sqrt{M_\alpha}+1)^2
\left(M_\alpha\sum_{k=1}^{M_\alpha} |r_{\alpha,k}|^2-|r_\alpha|^2\right),
\end{equation}
where we have applied the properties of $H_{\alpha,k}$ from eq. (\ref{HHH}) to simplify
\begin{equation}
\sum_{k,\ell=1}^{M_\alpha}r_{\alpha,k}\overline{r}_{\alpha,\ell}\Tr(H_{\alpha,k}H_{\alpha,\ell})
=(\sqrt{M_\alpha}+1)^2\left(M_\alpha\sum_{k=1}^{M_\alpha}|r_{\alpha,k}|^2-|r_\alpha|^2\right).
\end{equation}
Finally, if we compare eqs. (\ref{almost}) and (\ref{rho2}), it turns out that
\begin{equation}
\begin{split}
\sum_{\alpha=1}^L\sum_{k=1}^{M_\alpha}|\Tr(X P_{\alpha,k})|^2
&=\frac{|x_0|^2}{d^2} \sum_{\alpha=1}^L a_\alpha^2 M_\alpha 
+S\sum_{\alpha=1}^L(\sqrt{M_\alpha}+1)^2
\left(M_\alpha\sum_{k=1}^{M_\alpha} |r_{\alpha,k}|^2-|r_\alpha|^2\right)\\
&\leq\frac{|x_0|^2}{d^2} \sum_{\alpha=1}^L a_\alpha^2 M_\alpha 
+S\sum_{\alpha=1}^N(\sqrt{M_\alpha}+1)^2
\left(M_\alpha\sum_{k=1}^{M_\alpha} |r_{\alpha,k}|^2-|r_\alpha|^2\right)\\
&=|\Tr(X)|^2\mu_L+S\left[\Tr(X^\dagger X)-\frac{|\Tr(X)|^2}{d}\right]\\
&=S\Tr(X^\dagger X)+|\Tr(X)|^2\left(\mu_L-\frac{S}{d}\right).
\end{split}
\end{equation}

\section{Proof that $\Phi^{(k)}\leq \Lambda_k$}

Here, we prove that $A_k\leq B_k$, where
{\begin{equation}
A_k=d\mu_N+(dk-1)S
\end{equation}}
and
\begin{equation}
B_k=N^2\mu_N\left[d+(dk-1)\sqrt{S\left(d+(d^2-1)\frac{S}{\mu_N}\right)}\right].
\end{equation}
Note that $N\geq 1$, and hence $d\mu_N\leq dN^2\mu_N$. Thus, it is enough to show that
{\begin{equation}
(dk-1)S\leq N^2\mu_N(dk-1)\sqrt{S\left(d+(d^2-1)\frac{S}{\mu_N}\right)}.
\end{equation}}
Both terms are positive, so we take the square without changing the sign;
{\begin{equation}
(dk-1)^2S^2\leq N^4\mu_N(dk-1)^2S\left(d\mu_N+(d^2-1)S\right).
\end{equation}}
This further simplifies to
{\begin{equation}\label{ineq}
S\leq N^4\mu_N\left(d\mu_N+(d^2-1)S\right).
\end{equation}}
We prove a stronger relation,
{\begin{equation}
S\leq N^4\mu_N(d^2-1)S,
\end{equation}}
which reduces to
{\begin{equation}
M\leq N^3(d^2-1),
\end{equation}}
due to $\mu_N=1/(MN)$ {for $(N,M)$-POVMs. Finally, observe that}
\begin{equation}
M\leq N^3(d^2-1)\qquad\Longleftrightarrow\qquad 
N^4\geq 1+\frac{1}{M-1},
\end{equation}
where we used the identity $N(M-1)=d^2-1$ {that holds for all informationally complete $(N,M)$-POVMs.} As $M\geq 2$, the above inequality always holds for $N\geq 2$.

{If $N=1$, then $M=d^2$ and eq. (\ref{ineq}) is equivalent to
\begin{equation}\label{eee}
S\leq\frac 1d.
\end{equation}
Finally, we compare this with the admissible range of $S$ from eq. (\ref{Srange}). For the $(1,d^2)$-POVMs, this range is simply
\begin{equation}
0<S\leq \frac{1}{d(d+1)},
\end{equation}
and therefore eq. (\ref{eee}) is always satisfied.}

\end{document}